\newcommand\held[1]{{\fcolorbox{black}{white}{$#1$}}}
\newcommand\heldc[1]{{\underline{#1}}}
\newcommand{\agentSet}{\mathcal{N}}
\newcommand{\resourceSet}{\mathcal{R}}
\newcommand{\tuple}[1]{\langle #1 \rangle}
\newcommand{\CrawlerEx}[2]{\begin{tabular}{c}#1\\$\bullet$ \\#2\end{tabular}}
\newtheorem{definition}{Definition}
\newtheorem{theorem}{Theorem}
\newtheorem{proposition}{Proposition}
\newtheorem{example}{Example}
\newtheorem{observation}{Observation}
\begin{document}

\begin{frontmatter}
\title{An Optimal Procedure to Check Pareto-Optimality in \\ House Markets with Single-Peaked Preferences}

\author[lip6]{Aurélie Beynier}
\author[lip6]{Nicolas Maudet}
\author[illc]{Simon Rey}
\author[lip6]{Parham Shams}

\address[lip6]{LIP6, Sorbonne Université, Paris, France}
\address[illc]{ILLC, University of Amsterdam, Amsterdam, the Netherlands}

\begin{abstract}
	Recently, the problem of allocating one resource per agent with initial endowments (\emph{house markets}) has seen a renewed interest: indeed, while in the domain of strict preferences the Top Trading Cycle algorithm \citep{shapley1974cores} is known to be the only procedure guaranteeing Pareto-optimality, individual rationality, and strategy proofness \citep{ma1994strategy}. However, the situation differs in the single-peaked domain. Indeed, \cite{bade2019matching} presented the \emph{Crawler}, an alternative procedure enjoying the same properties, with the additional advantage of being implementable in obviously dominant strategies. In this paper we further investigate the Crawler and propose  the Diver, a variant which checks optimally whether an allocation is Pareto-optimal for single-peaked preferences, thus improving over known techniques used for checking Pareto-optimality in more general domains. We also prove that the Diver is asymptotically optimal in terms of communication complexity.
\end{abstract}


\end{frontmatter}

\section{Introduction}

\label{sec:intro}

Allocating indivisible resources among a set of agents is a research agenda that has been extensively studied in the recent years. It is a particularly dynamic field in both artificial intelligence (e.g. \cite{brandt2016handbook}) and economics (e.g. \cite{moulin2018fair}). It investigates the issue of fairly and/or efficiently allocating a set of objects to a set of agents while taking into account their preferences. 

In the present paper we focus on the model defined by \cite{shapley1974cores}, called \emph{house market} or \emph{assignment problem}, in which there are exactly as many indivisible resources as agents and where each agent should receive exactly one resource. In house markets, \cite{shapley1974cores} defined the \emph{Top Trading Cycle} (TTC) procedure which has been extensively studied \citep{roth1982incentive}.  
When all preferences expressed as strict linear orders are allowed, the TTC procedure is known to satisfy Pareto-optimality (it is not possible to improve some agents' satisfaction without hurting some others'), strategy-proofness (no one can benefit from reporting non truthful preferences) and individual rationality (no agent receives a house that she likes less than her initial endowment). It is also provably the only procedure enjoying such properties \citep{ma1994strategy} when preferences are strict.

However, preferences frequently exhibit some structures. The domain of \emph{single-peaked} preferences, initially introduced by \cite{black1948rationale} and \cite{arrow1951social} for voting scenarios, is one of the most studied preference domains \citep{moulin1991axioms, elkind2017structured}. It states that there is a common linear order such that all the preferences are decreasing when moving away from the most preferred resource following the order.
This domain is also relevant in resource allocation settings \citep{sprumont1991division, BrunerL15}. For instance, agents may be looking for houses in a street which has a metro station at one of its ends, and a bike rental platform at the other end; the agents' preferences are then likely to be single-peaked depending on the distance to their favourite means of transportation. However, until recently, only a few papers had studied this domain restriction in house markets. \cite{damamme2015power} investigated (distributed) swap dynamics in such settings. 
Motivated by the allocation of time-slots, \cite{hougaard2014assigning} and \cite{aziz2017computational} studied deterministic and probabilistic solutions for the problem of assigning objects to a line, the domain being more restrictive than single-peaked preferences in that case.


Recently, \cite{bade2019matching} presented the \emph{Crawler} procedure for the assignment problem with single-peaked preferences. 
This procedure differs from TTC and also satisfies Pareto-optimality, strategy-proofness and individual rationality. 
As a matter of fact, it is also implementable using \emph{obviously dominant strategies} \citep{li2017obviously} when TTC can not, thus demonstrating the benefit of considering the Crawler in the single-peaked domain restriction.

In this note we define a Crawler-based procedure for checking Pareto-optimality of a given allocation in the single-peaked domain, more efficiently. 
We first provide a brief description of the Crawler (Section \ref{sec:Crawler}), and we analyze its complexity. We then introduce a variant of this procedure, which we call the \emph{Diver} that  can be used to check Pareto-optimality of a given allocation in linear time (Section \ref{sec:CrawlerDiver}).
This improves over known results which resort on cycle detection techniques and thus run in $O(n^2)$ \citep{abraham}. 
The procedure also turns out to be optimal in terms of time and communication complexity.

\section{Preliminaries}
\label{sec:preliminaries}

 We consider a set of agents $\agentSet = \{a_1, \ldots, a_n\}$ and a set of resources $\resourceSet = \{r_1, \ldots, r_n\}$ of the same size. An allocation $\pi = \tuple{\pi_{a_1}, \ldots, \pi_{a_n}}$ is a vector of $\resourceSet^n$ whose components $\pi_{a_i} \in \resourceSet$ represent the single resource allocated to agent $a_i \in \agentSet$. 

Agents are assumed to express their preferences over the resources through complete linear orders. Agent $a_i$'s preferences are denoted by $\succ_{i}$ , where $r_1 \succ_{i} r_2$ means that $r_1$ is strictly preferred over $r_2$. A preference profile $L = \tuple{\succ_i}_{a_i \in \agentSet}$ is then a tuple of all the agents' preferences. 

For a given linear order $\succ$, we use $top(\succ)$ to denote the top-ranked resource. 
Similarly, $snd(\succ)$ refers to the second most preferred resource in $\succ$. With a slight abuse of notation we will write $top(a_i)$ and $snd(a_i)$ to refer to $top(\succ_{i})$ and $snd(\succ_{i})$. When it is not clear from the context we will subscript these notations to specify the resource set considered. For instance $top_R(a_i)$ is the most preferred resource for agent $a_i$ among the resources in $R \subseteq \resourceSet$.

An instance of a resource allocation problem is then a tuple $I = \tuple{\agentSet, \resourceSet, L, \pi^0}$ composed of a set of agents $\agentSet$, a set of resources $\resourceSet$, a preference profile $L$ and an initial allocation $ \pi^0$. 

\medskip

In some settings, natural properties of  the agents' preferences can be identified, thus restricting the set of possible preference orderings. The notion of preference domain formalizes these restrictions. For a set of resources $\resourceSet$, we denote by $\mathcal{L}_\resourceSet$ the set of all linear orders over $\resourceSet$. Any subset $D \subseteq \mathcal{L}_\resourceSet$ is then called a preference domain.

We say that an instance $I = \tuple{\agentSet, \resourceSet, L, \pi^0}$ is defined over a preference domain $D$ if the preferences of the agents belong to $D$.

In this note, we consider the single-peaked domain. In this setting, the agents are assumed to share a common axis $\lhd$ over the resources and with respect to which their preferences are defined.

\begin{definition}
	Let $\resourceSet$ be a set of resources and $\lhd$ a linear order (i.e. the axis) over $\resourceSet$. We say that a linear order $\succ$ is single-peaked with respect to $\lhd$ if we have:
	$$\forall (r_1, r_2) \in \resourceSet^2 \text{ s.t.} \left.\begin{array}{r}
	r_2 \lhd r_1 \lhd top(\succ), \\ 
	or, \enspace top(\succ) \lhd r_1 \lhd r_2
	\end{array}\right\} \Rightarrow r_1 \succ r_2.$$
\end{definition}

In other words, $\succ$ is single-peaked over $\lhd$ if $\succ$ is decreasing on both left and right sides of $top(\succ)$, where left and right are defined by $\lhd$.

For a given linear order $\lhd$, we call $\mathcal{SP}_\lhd$ the set of all the linear orders single-peaked with respect to $\lhd$:
$$\mathcal{SP}_\lhd = \{\succ{} \in \mathcal{L}_\resourceSet \mid{} \succ \text{ is single-peaked w.r.t. } \lhd\}.$$

A preference domain $D$ is called single-peaked if and only if $D \subseteq \mathcal{SP}_\lhd$ for a given $\lhd$. An instance $I$ is said to be single-peaked if it is defined over a single-peaked preference domain.

%

\medskip

Pareto optimality of the outcome  guarantees that no agent can improve her allocation without incurring a loss on at least another agent, while individual rationality guarantees agents have incentive to participate. 

\begin{definition}[Pareto-optimality]
	Let $I = \tuple{\agentSet, \resourceSet, L, \pi^0}$ be an instance. An allocation $\pi$ is said to be Pareto-optimal if there is no other allocation $\pi'$ such that for every agent $a_i \in \agentSet$ either $\pi'_{a_i} \succ \pi_{a_i}$ or $\pi'_{a_i} = \pi_{a_i}$ and there exists at least one agent $a_j \in \agentSet$ such that $\pi'_{a_j} \succ \pi_{a_j}$.
	
	If such allocation $\pi'$ exists, we say that $\pi'$ Pareto-dominates the allocation $\pi$.
\end{definition}

\begin{definition}[Individual rationality]
	For a given instance $I = \tuple{\agentSet, \resourceSet, L, \pi^0}$, an allocation $\pi$ is individually rational if for every agent $a_i \in \agentSet$ we have either $\pi_{a_i} \succ \pi^0_{a_i}$ or $\pi_{a_i} = \pi^0_{a_i}$.
\end{definition}

We illustrate these two concepts on a simple example.

\begin{example}
	\label{ex:firstEx}
	
	Let us consider the following instance with 5 agents and 5  resources. The preferences, presented below, are single-peaked with respect to $r_1 \lhd r_2 \lhd r_3 \lhd r_4 \lhd r_5$. The initial allocation $\pi^0 = \tuple{r_5, r_1, r_3, r_4, r_2}$ is defined by the underlined resources.	
	\begin{align*}
	a_1: \enspace & \held{r_1} \succ_{1} r_2 \succ_{1} r_3 \succ_{1} r_4 \succ_{1} \heldc{r_5} \\
	a_2: \enspace & \held{r_5} \succ_{2} r_4 \succ_{2} r_3 \succ_{2} r_2 \succ_{2} \heldc{r_1} \\
	a_3: \enspace & \held{\heldc{r_3}} \succ_{3} r_2 \succ_{3} r_1 \succ_{3} r_4 \succ_{3} r_5 \\
	a_4: \enspace & \held{\heldc{r_4}} \succ_{4} r_3 \succ_{4} r_2 \succ_{4} r_1 \succ_{4} r_5 \\
	a_5: \enspace & r_4 \succ_{5} r_5 \succ_{5} r_3 \succ_{5} \held{\heldc{r_2}} \succ_{5} r_1 	
	\end{align*}

	The allocation $\pi^0$ is not Pareto-optimal as it is Pareto-dominated by the squared allocation $\held{ \pi} = \tuple{r_1, r_5, r_3, r_4, r_2}$. Note that the allocation $\pi' = \tuple{r_1, r_5, r_3, r_2, r_4}$ would make every agent having at least their third top resource, but would violate individual rationality for agent $a_4$.
\end{example}

\section{The Crawler}
\label{sec:Crawler}

 In house markets under single-peaked preferences, \cite{bade2019matching} recently introduced the Crawler procedure.
The agents are initially ordered along the single-peaked axis according to the resource they initially hold. The first agent is the one holding the resource on the left side of the axis and the last agent is the one holding the resource on the right side of the axis. We denote by $R$ the list of available resources ordered according to the single-peaked axis and $N$ the list of available agents such as the $i^{th}$ agent of the list is the one who  holds the $i^{th}$ resource in $R$.  

The algorithm then screens the agents from left to right\footnote{Note that the algorithm can equivalently be executed from right to left.} and check, for each agent $a_i$, where the peak $top_R(a_i)$ of $a_i$ is:
\begin{enumerate}
	\item  If $top_R(a_i)$ is on her right, the algorithm moves to the next agent on the right. The agent is said to ``pass''. 
	\item  If $a_i$ holds her peak $top_R(a_i)$, then $top_R(a_i)$ is allocated to $a_i$, $a_i$ and $top_R(a_i)$ are then removed from $N$ and $R$. The algorithm restarts screening the agents from the left extremity of the axis. 
	\item If $top_R(a_i)$ is on the left of $a_i$, the agent is allocated her peak $top_R(a_i)$. Let $t^*$ be the index of $top_R(a_i)$ and $t$ the index of $a_i$ (we have $t^* < t$). Then, all the agents between $t^*$ and $t - 1$ receive the resource held by the agent on their right (the resources ``crawl'' towards left).  $a_i$ and $top_R(a_i)$ are then removed from $N$ and $R$.   The algorithm restarts screening the agents from the left extremity of the axis. 
\end{enumerate}
The algorithm terminates once $N$, and thus $R$, are empty. 

\medskip

A formal description of the procedure is given in Algorithm \ref{algo:Crawler}. Note that we make use of the sub-procedure $pick(a_{t^*},r,N,R,\pi)$ which simply assigns the resource $r$ to the given agent $a_{t^*}$  in the allocation $\pi$, and then removes the agent and the resource from the lists of available agents and resources, $N$ and $R$ respectively. Since the list of resources is ordered following the single-peaked axis and the $i^{th}$ agent in $N$ corresponds to the owner of the  $i^{th}$ resource in $R$, the removal of $r$ and $a_{t^*}$  is in fact equivalent to assigning $r$ to agent $a_{t^*}$ and crawling the resources from right to left.

\begin{algorithm}[t]
	\DontPrintSemicolon
	\KwIn{An instance $I = \tuple{\agentSet, \resourceSet, L, \pi^0}$ single-peaked with respect to $\lhd$}
	\KwOut{An allocation $\pi$}
	$\pi \gets$ empty allocation \;
	$R \gets \resourceSet$: list of resources sorted accordingly to $\lhd$\;
	$N \gets \agentSet$: list of agents such that the $i^{th}$ agent is the one who initially holds the $i^{th}$ resource in $R$\;
	\While {$N \neq \emptyset$} {
		$t^* \gets |N|$\;
		\For {$t = 0$ to $|N| - 1$} {
			\If(\tcc*[h]{no crawl}) {$r_t \succ_{t} r_{t + 1}$} {
				$t^* \gets t$\;
				Break\;
			}
		}
		$r \gets top_R(a_{t^*})$ \;
		$pick(a_{t^*},r,N,R, \pi)$\;
	}
	\Return $\pi$
	\caption{The Crawler procedure}
	\label{algo:Crawler}
\end{algorithm}

Let us illustrate the execution of the Crawler on the instance of Example \ref{ex:firstEx}.

\begin{example}
	Let us return to Example \ref{ex:firstEx}. The execution of the Crawler is depicted in Figure \ref{fig:exCrawler}. The initial allocation is presented in the box 1. Initially, agent $a_3$ is the first agent whose top is not on her right, she thus receives her top $r_3$ (box 2). The second step matches agent $a_4$ to $r_4$ (box 3). On the third step, agents $a_2$ and $a_5$ both have their top on the right but the last agent $a_1$  has her top on her left, she is then matched to her top $r_1$ (box 4). Agent $a_5$ is matched to $r_5$ on the fourth step (box 5). Finally $a_2$ is assigned resource $r_2$ (box 6).
	\begin{figure}
		\centering
		\resizebox{\linewidth}{!}{
			\begin{tikzpicture}[shorten <= 5pt, shorten >= 5pt]
			
			\node[draw, dashed, label = 180:\boxed{1}] (step0) {\begin{tikzpicture}[node distance = 0.2cm]
				\node[] (step1r1) {\CrawlerEx{$r_1$}{$a_2$}};
				\node[right = of step1r1] (step1r2) {\CrawlerEx{$r_2$}{$a_5$}};
				\node[right = of step1r2] (step1r3) {\CrawlerEx{$r_3$}{$a_3$}};
				\node[right = of step1r3] (step1r4) {\CrawlerEx{$r_4$}{$a_4$}};
				\node[right = of step1r4] (step1r5) {\CrawlerEx{$r_5$}{$a_1$}};
				\end{tikzpicture}};

			\node[draw, dashed, right = of step0, label = 180:\boxed{2}] (step1) {\begin{tikzpicture}[node distance = 0.2cm]
				\node[] (step1r1) {\CrawlerEx{$r_1$}{$a_2$}};
				\node[right = of step1r1] (step1r2) {\CrawlerEx{$r_2$}{$a_5$}};
				\node[right = of step1r2] (step1r3) {\held{\CrawlerEx{$r_3$}{$a_3$}}};
				\node[right = of step1r3] (step1r4) {\CrawlerEx{$r_4$}{$a_4$}};
				\node[right = of step1r4] (step1r5) {\CrawlerEx{$r_5$}{$a_1$}};
				\end{tikzpicture}};
			
			\node[draw, dashed,  below = of step0, label = 180:\boxed{3}] (step2) {\begin{tikzpicture}[node distance = 0.1cm]
				\node[] (step2r1) {\CrawlerEx{$r_1$}{$a_2$}};
				\node[right = of step2r1] (step2r2) {\CrawlerEx{$r_2$}{$a_5$}};
				\node[right = of step2r2] (step2r3) {\held{\CrawlerEx{$r_3$}{$a_3$}}};
				\node[right = of step2r3] (step2r4) {\held{\CrawlerEx{$r_4$}{$a_4$}}};
				\node[right = of step2r4] (step2r5) {\CrawlerEx{$r_5$}{$a_1$}};
				\end{tikzpicture}};
			
			\node[draw, dashed, below = of step2, label = 180:\boxed{5}] (step4) {\begin{tikzpicture}[node distance = 0.1cm]
				\node[] (step4r1) {\held{\CrawlerEx{$r_1$}{$a_1$}}};
				\node[right = of step4r1] (step4r2) {\CrawlerEx{$r_2$}{$a_2$}};
				\node[right = of step4r2] (step4r3) {\held{\CrawlerEx{$r_3$}{$a_3$}}};
				\node[right = of step4r3] (step4r4) {\held{\CrawlerEx{$r_4$}{$a_4$}}};
				\node[right = of step4r4] (step4r5) {\held{\CrawlerEx{$r_5$}{$a_5$}}};
				\end{tikzpicture}};
			
			\node[draw, dashed, right = of step4, label = 180:\boxed{6}] (step5) {\begin{tikzpicture}[node distance = 0cm]
				\node[] (step5r1) {\held{\CrawlerEx{$r_1$}{$a_1$}}};
				\node[right = of step5r1] (step5r2) {\held{\CrawlerEx{$r_2$}{$a_2$}}};
				\node[right = of step5r2] (step5r3) {\held{\CrawlerEx{$r_3$}{$a_3$}}};
				\node[right = of step5r3] (step5r4) {\held{\CrawlerEx{$r_4$}{$a_4$}}};
				\node[right = of step5r4] (step5r5) {\held{\CrawlerEx{$r_5$}{$a_5$}}};
				\end{tikzpicture}};
			
			\node[draw, dashed, above = of step5, label = 180:\boxed{4}] (step3) {\begin{tikzpicture}[node distance = 0cm]
				\node[] (step3r1) {\held{\CrawlerEx{$r_1$}{$a_1$}}};
				\node[right = of step3r1] (step3r2) {\CrawlerEx{$r_2$}{$a_2$}};
				\node[right = of step3r2] (step3r3) {\held{\CrawlerEx{$r_3$}{$a_3$}}};
				\node[right = of step3r3] (step3r4) {\held{\CrawlerEx{$r_4$}{$a_4$}}};
				\node[right = of step3r4] (step3r5) {\CrawlerEx{$r_5$}{$a_5$}};
				\end{tikzpicture}};
			
	\end{tikzpicture}}
	\caption{The Crawler procedure run on Example \ref{ex:firstEx}. Each dashed box corresponds to a step and a pair resource agent is boxed if the resource has been allocated to the agent.}
	\label{fig:exCrawler}
\end{figure}
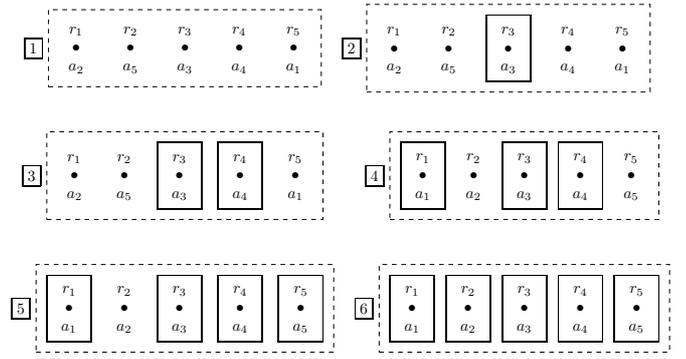
\end{example}

\medskip
As observed by \cite{bade2019matching}, the Crawler always terminates. It is easy to see that it runs in quadratic time. 

\begin{proposition}
	The Crawler procedure terminates and its complexity is in $\mathcal{O}(n^2)$ where $n$ is the number of agents and objects.
\end{proposition}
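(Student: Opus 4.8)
The plan is to bound the number of iterations of the outer \texttt{while} loop and the cost of a single iteration separately, and then multiply the two. The key structural observation is that each pass through the \texttt{while} loop invokes $pick$ exactly once, and that $pick$ removes exactly one agent from $N$ and one resource from $R$. Everything else reduces to counting.

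First I would establish that $pick$ is always called with a well-defined index $t^*$, i.e. that the inner \texttt{for} loop never fails to designate an assignable agent. This is where single-peakedness enters: by the definition of single-peaked preferences, if agent $a_t$ holds resource $r_t$ and $r_t \succ_{t} r_{t+1}$, then her peak $top_R(a_t)$ among the remaining resources cannot lie strictly to her right, so it is at her own position or to her left, which are exactly the two cases in which the Crawler assigns her $top_R(a_t)$ (direct match or crawl). The crucial special case is the rightmost remaining agent: since every resource to her right has already been removed, her peak is necessarily at her position or to her left, so the default value $t^* = |N|$ always yields a valid assignment. Consequently $pick$ fires on every iteration and the procedure cannot get stuck.

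Next, since $pick$ decreases $|N|$ by one and the loop runs while $N \neq \emptyset$, starting from $|N| = n$ the \texttt{while} loop executes exactly $n$ times before halting; this proves termination. For the per-iteration cost I would note that the inner \texttt{for} loop scans at most $|N| \le n$ agents, that evaluating $top_R(a_{t^*})$ over the remaining resources costs $O(n)$, and that $pick$ (an assignment together with deletion from the two ordered lists) is $O(n)$. Hence a single iteration costs $O(n)$, and the total running time is $O(n) \cdot O(n) = O(n^2)$.

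The only genuinely delicate point is the well-definedness of $t^*$, that is, ruling out the possibility that no agent can be served in some round; this is the sole place the single-peaked hypothesis is needed, and I would devote the bulk of the argument to the reasoning that guarantees the rightmost agent (captured by the default $t^* = |N|$) is always assignable. The $O(n^2)$ bound is then an immediate consequence of the \emph{$n$ iterations times $O(n)$ per iteration} decomposition, so I expect the complexity half of the statement to be routine.
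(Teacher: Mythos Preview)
Your proof is correct and mirrors the paper's argument exactly: $|N|$ strictly decreases at each pass through the \texttt{while} loop, giving at most $n$ iterations of cost $\mathcal{O}(n)$ each, hence $\mathcal{O}(n^2)$ overall. Your discussion of the well-definedness of $t^*$ goes beyond what the paper records (and beyond what termination strictly requires, since $t^*$ is always initialised before the inner loop irrespective of single-peakedness), but the core decomposition is identical.
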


\begin{proof}
	Termination is proved by observing that $|N|$ is strictly decreasing at each step of the main while loop. This loop is applied at most $n$ times and each step of the loop requires at most $\mathcal{O}(n)$ elementary operations. The time complexity is then in $\mathcal{O}(n^2)$.
\end{proof}

We conclude this section by studying the communication requirement of the procedure \citep{kushilevitz1996communication}. We are here interested in the amount of information communicated from the agents to the center. 

\begin{proposition}
	The crawler requires at most $n(n+1)/2 + n \log n$ bits of communication. 
\end{proposition}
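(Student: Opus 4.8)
The plan is to describe a protocol in which the agents feed just enough information to a center that simulates Algorithm~\ref{algo:Crawler}, and then to count the transmitted bits. I assume, as is standard for endowment mechanisms, that the center knows the axis $\lhd$ and the initial allocation $\pi^0$; since it runs the simulation itself, it therefore knows at every step the current lists $N$ and $R$ and which resource each remaining agent holds. The only private data are the preference orders $\succ_i$.

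The observation that carries the proof is that single-peakedness reduces the ``pass vs.\ stop'' decision to a \emph{single bit}. For the agent $a_t$ sitting at position $t$ and holding $r_t$, her peak among the remaining resources lies strictly to her right if and only if $r_{t+1} \succ_t r_t$, because a single-peaked order increases as one moves toward its top. Hence the center can settle whether $a_t$ passes or stops by asking the one comparison ``is $r_t \succ_t r_{t+1}$?'', whose answer is a single bit; this is precisely the test performed in the inner loop of Algorithm~\ref{algo:Crawler}. First I would prove this equivalence directly from the definition of $\mathcal{SP}_\lhd$, and stress that the center already knows which two resources $r_t$ and $r_{t+1}$ are meant, so the agent only returns the outcome of the comparison.

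I would then bound the two sources of communication separately. \emph{Screening.} In one pass of the while loop with $|N| = k$, the center scans agents from the left and halts at the first ``stop'', so it asks at most $k$ one-bit questions. As $|N|$ decreases through the values $n, n-1, \dots, 1$ over the (at most $n$) passes, the screening cost is at most $\sum_{k=1}^{n} k = n(n+1)/2$ bits. \emph{Allocation.} Each pass ends by picking one agent $a_{t^*}$; to run the \emph{pick} sub-procedure (and, in the crawl case, to locate the peak's index) the center needs that agent to name her favourite resource among those still available, a choice among at most $k$ items costing $\log_2 k$ bits. Since the $n$ successive picks reveal a permutation item by item, their total cost is $\sum_{k=1}^{n} \log_2 k = \log_2(n!) \le n \log n$ bits. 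Summing the two contributions gives the announced $n(n+1)/2 + n\log n$.

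The real obstacle is conceptual, not computational: it is the justification of the single-bit reduction together with the bookkeeping that guarantees the center still knows the configuration after each crawl, so that every requested comparison ``$r_t \succ_t r_{t+1}$'' is well defined. A minor point worth stating explicitly is that an agent who passes in one pass may be queried again in a later pass; this does not threaten the bound, since we only add up the per-pass maxima. Once the protocol and this observation are in place, the two summations are entirely routine.
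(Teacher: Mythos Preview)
Your proposal is correct and follows essentially the same approach as the paper: split the communication into one-bit pass/stop queries during the scan (summing to $n(n+1)/2$) and a single resource designation per round (summing to at most $n\log n$). Your write-up is in fact a bit more careful than the paper's, since you explicitly justify via single-peakedness why the pass/stop test reduces to the single comparison $r_t \succ_t r_{t+1}$, and you observe that the allocation cost is really $\log_2(n!)$ before rounding it up to $n\log n$; the paper simply charges $\log n$ bits per round for the designation and sums.
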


\begin{proof}
	The crawler runs in $n$ rounds. At each round $i$, every remaining agent is being asked whether she wishes to pass (answered using 1 bit) or to designate a resource on the left she wants to get (answered in at most $\log n$ bits). Overall, for the $i$-th round, at most $n - i$ agents will pass, and clearly only one agent designates a resource. Thus the protocol requires in the worst case $\sum_{i=1}^n [(n-i) + \log n] = n(n+1)/2 + n \log n$, which is in $\mathcal{O}(n^2)$. 
\end{proof}

Since communicating the full preference lists requires $\mathcal{O}(n^2 \log n)$, this protocol does save some communication, even asymptotically, compared to the naive protocol.

\section{Optimally checking Pareto-optimaility: the Diver}
\label{sec:CrawlerDiver}

We now turn to the question of whether we can gain advantage from the single-peaked domain in order to check Pareto-optimality more efficiently. Recall that in the domain allowing any linear order, this can be done in quadratic time \citep{abraham}.

More formally, given an instance $I = \tuple{\agentSet, \resourceSet, L, \pi^0}$ single-peaked with respect to $\lhd$, we consider the problem \textsc{CheckPO} whose answer is yes if and only if $\pi^0$ is Pareto-optimal. We assume that $\lhd$ is known by the agents. Without loss of generality, we consider that $\forall a_i \in \agentSet, \pi^0_{a_i} = r_i$.

First, observe that the Crawler indeed returns the initial allocation when it is Pareto-optimal. 
\begin{observation}
	Let $\pi$ be an allocation, $\pi$ is Pareto-optimal if and only if the Crawler returns $\pi$ when applied to $\pi$ as the initial allocation.
\end{observation}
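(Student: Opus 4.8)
The plan is to derive the equivalence from two properties of the Crawler already established by \cite{bade2019matching}: the allocation it outputs is always (a) Pareto-optimal and (b) individually rational with respect to the initial endowment passed to it as input. The key observation is that here the input initial allocation is $\pi$ itself, so the individual rationality guarantee reads: the returned allocation $\pi'$ satisfies, for every agent $a_i \in \agentSet$, either $\pi'_{a_i} \succ_i \pi_{a_i}$ or $\pi'_{a_i} = \pi_{a_i}$. Both directions then reduce to invoking one of these two properties.

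For the backward direction ($\Leftarrow$), I would argue directly: if the Crawler returns $\pi$ when fed $\pi$ as its initial allocation, then $\pi$ is the output of the Crawler on this instance, hence Pareto-optimal by property (a). No further work is needed.

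For the forward direction ($\Rightarrow$), I would argue by contradiction. Suppose $\pi$ is Pareto-optimal but the Crawler returns $\pi' \neq \pi$. Since allocations are permutations of $\resourceSet$ and $\pi' \neq \pi$, there is at least one agent $a_j$ with $\pi'_{a_j} \neq \pi_{a_j}$; individual rationality (property (b)) then forces $\pi'_{a_j} \succ_j \pi_{a_j}$, so this agent strictly improves. For every other agent $a_i$, property (b) gives $\pi'_{a_i} \succ_i \pi_{a_i}$ or $\pi'_{a_i} = \pi_{a_i}$. Hence $\pi'$ Pareto-dominates $\pi$, contradicting the assumed Pareto-optimality of $\pi$; therefore $\pi' = \pi$.

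The main obstacle is conceptual rather than computational: it lies in the forward direction, where strictness of the preference orders is essential. It is precisely strictness that upgrades the statement ``the allocation changed for agent $a_j$'' into ``agent $a_j$ strictly improved''; with indifferences permitted the argument would break down. A secondary point to state carefully is that the individual rationality invoked is the one relative to the endowment actually supplied to the Crawler, which in this setting is $\pi$ and not some external reference allocation $\pi^0$.
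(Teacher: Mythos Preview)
Your proposal is correct and follows essentially the same approach as the paper: both directions are derived from the Crawler's Pareto-optimality and individual rationality (with respect to the input allocation), exactly as you do. Your forward direction is in fact slightly more explicit than the paper's, which simply asserts that ``no trading cycle will be performed'' by individual rationality; your argument spells out why a differing output would Pareto-dominate $\pi$, and you rightly flag that strictness of preferences is what makes this work.
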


\begin{proof}
	Any allocation $\pi$ returned by the Crawler is Pareto-optimal \citep{bade2019matching}. Thus, if $\pi$ is not Pareto-optimal a different allocation is returned.  
	On the contrary, if $\pi$ is Pareto-optimal, and since the Crawler is individually rational, no trading cycle will be performed during the execution. The Crawler thus returns $\pi$.
\end{proof}

However, this procedure does not enjoy better complexity guarantees than the ones not specific to single-peaked domains, as its worst-case time complexity is in $\mathcal{O}(n^2)$. A worst-case instance can be described as follows: suppose that all the agents (ordered from left to right), have the next resource on their right as their top, except for the last one who likes her own resource. In that case, at each step, the Crawler would go through all the agents before realizing that the last one wants to keep her resource.

\medskip

We propose a variant of the Crawler, called the \emph{Diver}, which allows to check Pareto-optimality of the initial allocation more efficiently.
The key difference with the Crawler is that the \emph{Diver} procedure does not start a new screening once an agent picks a resource: it only checks whether the last agent who was happy to crawl for this resource now agrees to \emph{dive} to the next one. The  \emph{Diver} thus proceeds in a \emph{single screening} of the agents. 

At each step, the central entity asks the agent whether she wishes to:  
\begin{enumerate}
	\item[(1)] pick her current resource; 
	\item[(2)] pass (expressing that she is happy to dive to the next resource); or 
	\item[(3)] pick a smaller resource. 
\end{enumerate}
Note that, each time an agent picks a resource, the central entity communicates this information to the other remaining agents so that they can update their list of available resources. 

In case (1), the agent (and her resource) are removed and we enter a sub-protocol called \emph{backtrack-call} in which the previous agents are asked one by one whether they still agree to dive to the next resource. This sub-protocol stops as soon as one agent says yes, or when there are no more agents left to consider. All the agents who said 'no' pick their current resources and are themselves removed together with their resource. 

Whenever an agent is happy to dive to the next resource, the diver simply moves on to the next agent. This is case (2).

As soon as an agent says she wants a smaller resource, the protocol stops and returns 'not PO'. This corresponds to case (3). Note that in this case, we have the guarantee that there is indeed a better resource available, otherwise the agent would have picked her own resource. 

If the screening goes through all the agents, then all the agents left the protocol with their own resource, and the protocol returns 'PO'. 

\medskip

The protocol is formally described in Algorithm \ref{algo:diver}. 
The sub-procedure $pick(a_i,r)$ simply assigns resource $r$ to agent $a_i$, while $pick(a_i,r,D)$ does the same, and removes agent $a_i$ from the list $D$ of agents who crawl or dive.  

\begin{algorithm}[t]
	\DontPrintSemicolon
	\KwIn{An instance $I = \tuple{\agentSet, \resourceSet, L, \pi^0}$ single-peaked with respect to $\lhd$}
	\KwOut{PO if $\pi^0$ is Pareto-optimal and not PO otherwise}
	$\pi \gets$ list of pairs $(a_i,r_i)$ such that agent $a_i$ holds resource $r_i$ in $\pi^0$, sorted according to $\lhd$ for the resources  \;
	$D \gets \emptyset$: list of agents who crawl or dive\;
	
	\For { $(a_i,r_i)$ in $\pi$} {
		\uIf(\tcc*[h]{pick your top}) {$top_R(a_{i}) = r_{i}$} {	
			$pick(a_{i}, r_{i})$ \;
			\For {$a_j$ in $reverse(D)$} {
				\uIf(\tcc*[h]{if you don't dive, pick your resource}){$r_{j} \succ_{a_{j}} r_{i + 1}$}
				{$pick(a_{j},r_{j},D)$\;}
				\Else{Break \;}
				
		}}
		
		\uElseIf(\tcc*[h]{your top is on your left: not PO}){$r_{i} \succ_{a_{i}} r_{i + 1}$}{
			\Return not PO \;}
		\Else(\tcc*[h]{crawl}){$D \gets D.append(a_i)$}

	}
	\Return PO \;
	\caption{The Diver procedure}
	\label{algo:diver}
\end{algorithm}

\begin{example}
	Coming back to Example \ref{ex:firstEx}, by applying the Diver to the initial allocation $\pi^0$,  the agents are first sorted as follows:
	
	\begin{center}	
		\resizebox{0.7\linewidth}{!}{
			\begin{tikzpicture}[shorten <= 5pt, shorten >= 5pt]
			
			\node[draw, dashed] (step0) {\begin{tikzpicture}[node distance = 0.2cm]
				\node[] (step1r1) {\CrawlerEx{$r_1$}{$a_2$}};
				\node[right = of step1r1] (step1r2) {\CrawlerEx{$r_2$}{$a_5$}};
				\node[right = of step1r2] (step1r3) {\CrawlerEx{$r_3$}{$a_3$}};
				\node[right = of step1r3] (step1r4) {\CrawlerEx{$r_4$}{$a_4$}};
				\node[right = of step1r4] (step1r5) {\CrawlerEx{$r_5$}{$a_1$}};
				\end{tikzpicture}};	
			
	\end{tikzpicture}}	
\end{center}

The Diver screens the agent from left to right and asks each agent her wish:
\begin{enumerate}
	\item $a_2$ passes;
	\item $a_5$ passes;
	\item $a_3$ picks her current resource,  $a_5$ still agrees to pass; 
	\item $a_4$   picks her current resource,  $a_5$ still agrees to pass; 
	\item $a_1$ wants to pick a smaller resource ($r_1$) $\rightarrow$ the Diver returns 'not PO'.
\end{enumerate}
Indeed, this allocation is dominated by $\langle r_1, r_2, r_3, r_4, r_5 \rangle$. 

Now let us consider the allocation $\pi = \langle r_1, r_5, r_2, r_4, r_3 \rangle$ leading to the following order:

\begin{center}		
	\resizebox{0.7\linewidth}{!}{
		\begin{tikzpicture}[shorten <= 5pt, shorten >= 5pt]
		
		\node[draw, dashed] (step0) {\begin{tikzpicture}[node distance = 0.2cm]
			\node[] (step1r1) {\CrawlerEx{$r_1$}{$a_1$}};
			\node[right = of step1r1] (step1r2) {\CrawlerEx{$r_2$}{$a_3$}};
			\node[right = of step1r2] (step1r3) {\CrawlerEx{$r_3$}{$a_5$}};
			\node[right = of step1r3] (step1r4) {\CrawlerEx{$r_4$}{$a_4$}};
			\node[right = of step1r4] (step1r5) {\CrawlerEx{$r_5$}{$a_2$}};
			\end{tikzpicture}};		
\end{tikzpicture}}
\end{center}

Again, the Diver screens the agent from left to right and asks each agent her wish:
\begin{enumerate}
\item $a_1$ picks her current resource;
\item $a_3$ passes;
\item $a_5$ passes; 
\item $a_4$  picks her current resource,  $a_5$ still agrees to pass; 
\item $a_2$ picks her current resource,  $a_5$ picks her current resource, $a_3$ picks her current resource. All the agents have left with their resource and the Diver returns 'PO'.
\end{enumerate}

\end{example}

Next, we prove the correctness of the Diver and show that it runs in $\mathcal{O}(n)$. 

\begin{theorem}
The Diver always terminates, runs in $\mathcal{O}(n)$ and returns whether the initial assignment is Pareto-optimal or not. 
\end{theorem}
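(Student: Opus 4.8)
The plan is to prove the three assertions separately, disposing of termination and the $\mathcal{O}(n)$ bound quickly and devoting the bulk of the work to correctness, which I expect to be the hard part. For complexity, I would observe that the main loop performs a single left-to-right pass and so runs exactly $n$ iterations; the only non-obvious cost is the inner backtrack-call over $reverse(D)$, but each of its iterations either removes an agent from $D$ permanently or executes \textbf{break}. Since each agent is appended to $D$ at most once (in the crawl branch) over the whole run, the total number of removals, and hence the total number of inner iterations, is at most $n$. Each elementary test — whether $top_R(a_i)=r_i$ and whether $r_i\succ_{a_i}r_{i+1}$ — is decidable in constant time by comparing $r_i$ with its current right-neighbour (and, for the top test, its left-neighbour), by single-peakedness. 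With the resources already in axis order, as assumed without loss of generality, this gives $\mathcal{O}(n)$ overall, and termination is then immediate.

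For correctness I would first establish a structural invariant, maintained throughout the pass: when the main loop reaches agent $a_i$, the list $D$ consists exactly of the available agents lying strictly to the left of $a_i$ in the current (post-removal) axis order, there is no available agent further left than $D$, and every agent in $D$ has her peak among the currently available resources strictly to her right. The invariant is preserved because the crawl branch appends the current agent to the right end of $D$, while each pick deletes a contiguous suffix of $D$; hence $D$ remains a contiguous block abutting the current agent in available-space.

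Granting the invariant, the direction ``Diver outputs not PO $\Rightarrow$ $\pi^0$ is not Pareto-optimal'' is by explicit construction. When case (3) fires at $a_i$ we have $r_i\succ_{a_i}r_{i+1}$ while $r_i$ is not her top, so by single-peakedness her top $r$ among available resources lies strictly to her left; by the invariant every available agent between $r$ and $a_i$ belongs to $D$, and therefore each of them strictly prefers the next available resource on her right to her own. Rotating this block one step to the left — giving $a_i$ her top $r$ and giving every other agent of the block the resource currently held by her right-neighbour — makes all of them strictly better off while leaving everyone else (who keeps her $\pi^0$-resource) untouched, exhibiting a Pareto-improvement over $\pi^0$.

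For the converse I would prove that the sequence of \emph{pick} operations of the Diver coincides with that of the Crawler up to the first crawl, and that the Diver reaches case (3) exactly when the Crawler would first perform a crawl. The engine of this correspondence — and the step I expect to be the main obstacle — is a monotonicity lemma for the backtrack-call: after a pick, if an agent $a_j\in D$ still prefers diving to her updated right-neighbour, then so does every agent further left in $D$. This holds because consecutive elements of $D$ were adjacent in available-space when the left one was appended, so that agent's strict preference for her right-neighbour's resource over her own was recorded at insertion time and is unaffected by removals occurring strictly to her right; this justifies the \textbf{break} and shows that the suffix of $D$ deleted by a backtrack-call is precisely the set of agents whose top has just become their own resource, matching the picks the Crawler makes on restarting. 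I would then close with the chain of equivalences: $\pi^0$ is Pareto-optimal $\iff$ (by the Observation) the Crawler returns $\pi^0$ $\iff$ the Crawler never crawls $\iff$ (by the correspondence) the Diver never reaches case (3) $\iff$ the Diver returns PO.
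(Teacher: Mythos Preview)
Your proposal is correct. Termination, the $\mathcal{O}(n)$ bound, and the ``not PO $\Rightarrow$ not Pareto-optimal'' direction are handled essentially as in the paper: a single pass, an amortised count on the backtrack loop, and an explicit improving cycle among the still-available agents to the left.

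Where you diverge is in the converse direction. You route it through a step-by-step simulation of the Crawler: the monotonicity lemma certifies that the \textbf{break} is safe, hence the Diver's sequence of picks matches the Crawler's up to the first crawl, and you then invoke the Observation to close the equivalence chain. This works, and the monotonicity lemma is indeed the crux (your justification --- that the right-neighbour of any non-rightmost element of $D$ is unchanged by the current removal --- is the right one). The paper, by contrast, avoids the Crawler entirely at this point and argues directly: when the Diver returns PO, a backward sweep shows every agent eventually picked her own resource; then, listing the picks in the order they occurred, each agent received her best available resource (for main-loop picks this is literally the test $top_R(a_i)=r_i$; for backtrack picks it follows from single-peakedness plus the invariant that agents in $D$ have their peak to the right). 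An allocation produced by a sequence of best-available picks is Pareto-optimal by the standard serial-dictatorship argument. Your approach is heavier but yields the exact Crawler correspondence as a by-product; the paper's is shorter and self-contained, needing neither the Observation nor any appeal to the Crawler's properties beyond the original Pareto-optimality argument of \cite{bade2019matching}.
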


\begin{proof}
Termination is obvious since the procedure proceeds in a single main screening of the finite set of resources. 
We first show that the procedure is sound. 
First observe that when the Diver returns 'PO', all the agents must have picked their initial resource. Indeed, consider the last agent in the order: this agent picked her resource (otherwise the procedure would have returned 'not PO'). But now the agent on her left must also have picked her resource (as there is no more possibility to dive), and so on until there are no agents remaining. 
Now, following the argument used in \cite{bade2019matching}, 
consider all the agents who picked their resource during this process, in the order they picked it: they clearly have all picked their best available resource. 
The obtained matching is thus indeed Pareto-optimal. 
On the other hand, when the Diver returns 'not PO', there is indeed an improving cycle, consisting of the agent (say, $a_j$) who chose a resource on her left, and all the agents, from the owner of this resource to $a_j$, who are not matched yet. 

\medskip

In terms of complexity, sorting the agents according to the single-peaked order can be done in $\mathcal{O}(n)$ using counting sort \citep[Section 8.2]{cormen2009introduction}. 
Now for the main loop of the procedure: in the reverse loop, note that if $k+1$ agents are screened backwards, then $k$ agents are removed for good. Thus through the entire procedure the reverse loop involves $\mathcal{O}(n)$ steps, and thus the main loop takes $\mathcal{O}(n)$ as well, which gives us the linear time complexity.   
\end{proof}




The same line of analysis allows us to derive a result regarding the amount of communication induced by the procedure, as can be done for other social choice mechanisms, see \emph{e.g.} \cite[Chapter 10]{brandt2016handbook}. 
The Diver only requires a linear (in the number of agents) number of bits to be communicated from the agents.

\begin{proposition}
The Diver requires 4n bits of communication. 
\end{proposition}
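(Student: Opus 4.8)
The plan is to account for every bit each agent sends during a single execution of the Diver and show the total is bounded by $4n$. The central observation is that each agent communicates her wish at exactly the moments the protocol queries her, and I would argue that each agent is queried a bounded number of times with a bounded number of bits per query. First I would fix the vocabulary: at any query an agent replies with one of the three options described before Algorithm \ref{algo:diver} --- \emph{pick}, \emph{pass}, or \emph{pick smaller} --- together with, in the picking cases, the identity of the resource she selects.

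The key accounting step is to distinguish the two ways an agent can be queried. An agent $a_i$ is queried once in the forward (main) screening loop, when the Diver reaches her position; here she answers with one of the three options, which I would encode in $2$ bits. Additionally, an agent who previously chose to dive (i.e.\ was appended to $D$) may be re-queried inside a \emph{backtrack-call}, each time answering a binary question: whether she still agrees to dive to the new next resource, or instead picks her current resource. The crucial point, already used in the time-complexity argument of the Theorem, is that within the backtrack loop an agent is re-queried at most until she either says yes (after which the backtrack stops) or says no (after which she is removed for good via $pick(a_j,r_j,D)$). Hence each agent contributes at most one ``yes, still diving'' answer that does not remove her, plus at most one terminal answer; charging these binary replies gives a further constant number of bits per agent.

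To turn this into the bound $4n$ I would assign a budget of at most $4$ bits to each agent and check no agent exceeds it: roughly $2$ bits for her single forward-screening reply, and at most $2$ further bits for her involvement in backtrack-calls, since across the whole run an agent is removed exactly once and the ``keep diving'' confirmations can be amortized against the forward progress of the screening (the same amortization that yields the $\mathcal{O}(n)$ reverse-loop bound in the Theorem). Summing the per-agent budgets over the $n$ agents yields the claimed $4n$ bits. Here one must be a little careful that designating a resource does not cost $\log n$ bits as in the Crawler: in the Diver the only resource an agent ever designates is her \emph{current} (held) one or the specific ``smaller'' resource triggering termination, so the central entity already knows the relevant position from the screening order and no explicit resource index need be transmitted.

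The main obstacle I expect is precisely pinning down the encoding so that picking a resource costs a constant rather than $\log n$ bits, and making the amortization of the backtrack replies fully rigorous: one has to verify that the binary confirmations sent inside backtrack-calls can be charged so that no agent's total exceeds the constant budget, using that each agent is definitively removed at most once and that a ``pick smaller'' answer terminates the entire protocol. Once the invariant ``each agent sends at most $4$ bits over the entire execution'' is established, the proposition follows immediately by summation.
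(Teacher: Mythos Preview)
Your decomposition into ``forward-screening bits'' and ``backtrack bits'' matches the paper's, but the per-agent invariant you aim for is false, and this is a genuine gap. You write that each agent contributes ``at most one `yes, still diving' answer'' in backtrack-calls, and you conclude with the invariant ``each agent sends at most $4$ bits over the entire execution''. This does not hold. Consider an instance where $a_1$ passes (is appended to $D$) and then each of $a_2,\dots,a_{n-1}$ in turn picks her own resource. After every such pick the backtrack-call reaches $a_1$ (she is the last element of $D$), and if her peak is far to the right she answers ``yes, I still dive'' every time. That single agent then sends $\Theta(n)$ bits in backtrack replies, so no constant per-agent budget can work.

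The paper's argument avoids this by counting \emph{globally} rather than per agent: over the whole run there are at most $n$ ``no'' answers in backtrack-calls (each ``no'' removes an agent from $D$ for good) and at most $n$ ``yes'' answers (each ``yes'' terminates one backtrack-call, and there is at most one backtrack-call per iteration of the main loop). Hence the backtrack sub-protocol costs at most $n+n=2n$ bits in total, and adding the $2n$ bits for the ternary forward queries gives $4n$. Your amortization intuition is pointing at exactly this, but the correct target of the amortization is the global count of ``yes'' answers (charged to main-loop iterations), not a per-agent cap. Your side remark about not needing $\log n$ bits to name a resource is correct but unnecessary here: a ``pick smaller'' answer terminates the protocol immediately, so no resource index is ever transmitted.
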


\begin{proof}
The key is to observe that sub-protocol backtrack-call requires overall $n + n$ bits, as there may only be $n$ agents saying 'no' and $n$ agents saying 'yes' throughout the whole run of the Diver.
In the main loop of the protocol, the query requires 2 bits to be answered (as there are three possible answers). This makes overall $2n + 2n = 4n$ bits, thus, only $\mathcal{O}(n)$ bits.
\end{proof}


We now show that the Diver is asymptotically optimal, both in terms of time and communication complexity. 
In fact, a simple adversarial argument suffices to show that any algorithm solving this problem must query at least $n-1$ agents and thus read an input of this size at least. 

\begin{proposition}
	In the single-peaked domain, the time complexity of \textsc{CheckPO} is $\Omega(n)$.
\end{proposition}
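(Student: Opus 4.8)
The plan is to establish this lower bound by a standard adversary (fooling) argument: I will exhibit two instances that are indistinguishable to any algorithm inspecting fewer than $n-1$ of the agents' preferences, yet have opposite answers to \textsc{CheckPO}. Concretely, fix the axis $r_1 \lhd \cdots \lhd r_n$ and the canonical allocation $\pi^0_{a_i} = r_i$, and take as a reference the instance $I_0$ in which every agent $a_i$ has $r_i$ as her peak (completed to any single-peaked order, which exists for any choice of peak). Since each agent already holds her top resource, $I_0$ is trivially Pareto-optimal, so a correct algorithm must answer PO on $I_0$.

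Next I would suppose, for contradiction, that a correct algorithm $A$ inspects the preferences of at most $n-2$ agents when run on $I_0$. Then at least two agents, say $a_i$ and $a_j$ with $i \neq j$, are never queried. I construct a second instance $I'$ that coincides with $I_0$ on every agent except $a_i$ and $a_j$, for which I set the peak of $a_i$ to $r_j$ and the peak of $a_j$ to $r_i$ (each completed to an arbitrary single-peaked order). Exchanging $r_i$ and $r_j$ while leaving every other agent with her own resource is then a Pareto improvement over $\pi^0$: agents $a_i$ and $a_j$ both strictly improve to their new peaks, and no one else is affected. Hence $I'$ is \emph{not} Pareto-optimal, and a correct algorithm must answer not PO on $I'$.

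The crux is then to show that $A$ cannot tell $I_0$ and $I'$ apart. Since the two instances differ only in the preferences of $a_i$ and $a_j$, and every query issued by $A$ depends only on the answers to previous queries, an easy induction on the query sequence shows that $A$ makes exactly the same queries and receives exactly the same answers on $I'$ as on $I_0$, provided it never queries $a_i$ or $a_j$, which is guaranteed by our choice of these two agents. Consequently $A$ returns the same verdict on both instances, contradicting the fact that it should answer PO on $I_0$ and not PO on $I'$. Therefore every correct algorithm must inspect the preferences of at least $n-1$ agents on $I_0$, and since each such inspection costs at least one elementary operation, any algorithm for \textsc{CheckPO} runs in $\Omega(n)$ time.

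I expect the only delicate points to be bookkeeping ones rather than conceptual: checking that the two peaks chosen in $I'$ admit single-peaked completions (immediate, as every resource is the peak of some single-peaked order) and that the adaptivity of $A$ is correctly handled by the induction guaranteeing identical executions on $I_0$ and $I'$. Combined with the linear communication bound of the previous proposition, this shows that the Diver is asymptotically optimal in both time and communication complexity.
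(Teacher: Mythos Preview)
Your argument is correct and follows exactly the same adversary idea as the paper: take the instance where every agent's peak is her own house, and observe that as long as two agents remain unqueried they could form a trading cycle, so no algorithm can decide \textsc{CheckPO} after fewer than $n-1$ queries. You have simply spelled out in full the construction of the fooling instance $I'$ and the indistinguishability induction that the paper leaves implicit.
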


\begin{proof}
	Simply take as adversarial input an instance where every agent receives initially her preferred house. No procedure can answer \textsc{CheckPO} before querying less than $n-1$ agents. Indeed, as long as two agents remain to be queried, it could be that they form a trading cycle. 
\end{proof}



As each query requires at least one bit to be answered, this immediately implies a similar bound on the communication complexity. This can alternatively be shown by exhibiting a straightforward \emph{fooling set} \citep{kushilevitz1996communication}, which we provide for completeness. We consider strict preferences for the agents written $\succ_i^L$ in a profile $L$, and by a slight abuse of notation we write  $\succ_i^{L|L'}$ to say that $\succ_i$ is the preference of agent $a_i$ in either $L$ or $L'$.  In our context, the fooling set will be a collection of profiles $\mathcal{F} = \langle L_1, \dots L_K \rangle$ such that:
\begin{enumerate}
\item for any $i \in \{1,\dots K\}$ \textsc{CheckPO}'s answer on $\tuple{\agentSet, \resourceSet, L_i, \pi^0}$ is yes.
\item for any $i \not =j$, there exists $L' = \langle \succ_1^{L_i|L_j}, \dots \succ_n^{L_i|L_j} \rangle$, such that \textsc{CheckPO}'s answer on $\tuple{\agentSet, \resourceSet, L', \pi^0}$ is no.  
\end{enumerate}
By a standard result in communication complexity, it is known that $\log |\mathcal{F}|$ is a lower bound on the communication complexity of the problem \citep{kushilevitz1996communication}.

\begin{proposition}
	\label{prop:LowerBoundComComplex}
	In the single-peaked domain, the communication complexity of \sc{CheckPO} is $\Omega(n)$. 
\end{proposition}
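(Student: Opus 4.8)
The plan is to exhibit an explicit fooling set $\mathcal{F}$ of size exponential in $n$, so that the standard lower bound $\log|\mathcal{F}|$ yields $\Omega(n)$. The natural building block is the adversarial instance from the time-complexity proof: an instance where every agent initially holds her top resource is Pareto-optimal, since no improving trade is possible. To create a fooling set I would start from such ``all-agents-hold-their-peak'' profiles and introduce many independent choices whose individual values cannot be detected by the protocol but whose pairwise mixtures create an improving cycle.

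Concretely, I would pair up the agents along the axis, say agents holding $r_{2k-1}$ and $r_{2k}$ for $k=1,\dots,\lfloor n/2\rfloor$, and in each pair offer two mutually exclusive single-peaked profile fragments: in fragment $A$ both agents peak on their own resource (the left agent prefers $r_{2k-1}$, the right agent prefers $r_{2k}$), while in fragment $B$ the two agents' peaks are swapped so that the left agent most prefers $r_{2k}$ and the right agent most prefers $r_{2k-1}$, with the remaining resources ranked to preserve single-peakedness with respect to $\lhd$. Any profile $L_i$ built by choosing one of the two \emph{Pareto-optimal} fragments per pair then satisfies condition~(1): each chosen fragment leaves both agents holding an individually best available resource so no trade occurs. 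This gives $|\mathcal{F}| = 2^{\lfloor n/2\rfloor}$ candidate profiles, whence $\log|\mathcal{F}| = \Omega(n)$.

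The core of the argument is condition~(2): given two distinct profiles $L_i \neq L_j$, they must differ on at least one pair $k$, and I would take from that pair the agent preferences that combine into an improving two-cycle (the left agent taking $r_{2k}$ and the right agent taking $r_{2k-1}$), while on every other pair I copy the preferences from whichever of $L_i, L_j$ agrees. I must check that the resulting mixed profile $L'$ is genuinely of the form $\langle \succ_1^{L_i|L_j},\dots,\succ_n^{L_i|L_j}\rangle$, i.e. each agent's preference in $L'$ is taken from $L_i$ or from $L_j$, and that the two-cycle on pair $k$ survives regardless of the other pairs (it does, since each pair's resources are local and the swap of $r_{2k-1}, r_{2k}$ between the two agents is a self-contained Pareto-improving exchange). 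Thus \textsc{CheckPO}'s answer on $L'$ is no.

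The main obstacle I anticipate is designing the two fragments so that (a) both fragments are single-peaked with respect to the common axis $\lhd$, (b) fragment $A$ really is individually rational and untradeable in the combined instance (so condition~(1) holds for every product choice, not just in isolation), and (c) from any pair of differing profiles one can always extract a mixture realizing an improving cycle while keeping the other coordinates consistent with $L_i$ or $L_j$. Verifying single-peakedness of the swapped fragment is the delicate point, since making agent $a$ peak on a neighbouring resource forces a specific ordering of the farther resources; I would handle this by confining each pair's ``interesting'' preferences to its two adjacent resources and ranking all other resources monotonically outward along $\lhd$, which makes single-peakedness immediate and localizes the improving cycle to a single pair.
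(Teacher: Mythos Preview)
Your construction has a genuine error in the definition of fragment~$B$. You say that in fragment~$B$ ``the left agent most prefers $r_{2k}$ and the right agent most prefers $r_{2k-1}$''; but the left agent \emph{holds} $r_{2k-1}$ and the right agent \emph{holds} $r_{2k}$ in $\pi^0$, so in fragment~$B$ the two agents form an improving swap and $\pi^0$ is \emph{not} Pareto-optimal. Hence any profile $L_i$ that uses fragment~$B$ on even a single pair already fails condition~(1) of a fooling set, and your family collapses to the single all-$A$ profile. Your later mixing step confirms the problem: to create the improving two-cycle at a differing pair you take \emph{both} agents' preferences from fragment~$B$, i.e.\ both from whichever of $L_i,L_j$ used~$B$ there---but that very profile is then itself not PO.

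The fix is to make each fragment \emph{consensual within the pair}: in fragment~$A$ both agents prefer $r_{2k-1}$ to $r_{2k}$, and in fragment~$B$ both prefer $r_{2k}$ to $r_{2k-1}$. Either way the pair is Pareto-optimal (the agent holding the commonly preferred resource refuses any trade), and when two profiles differ at pair~$k$ you take the left agent's preference from the $B$-profile and the right agent's from the $A$-profile to obtain the improving swap. This repaired construction is in fact a localised version of what the paper does globally: the paper takes $\mathcal{F}$ to be the $2^{n-1}$ fully consensual single-peaked profiles (all agents share a common linear order), observes that any consensual profile makes $\pi^0$ Pareto-optimal, and for two distinct consensual orders picks a disagreeing pair $(r_p,r_q)$ and mixes only agent $a_p$'s preference to create a two-cycle with $a_q$. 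The paper's route avoids the pairing machinery altogether and gives a slightly larger fooling set.
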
 

\begin{proof}
Let us call a \emph{consensual profile} the profile where $\succ_i = \succ_j$ for any $(i,j)$, i.e. all the agents have the same linear orders over the resources. The consensual linear order will be denoted by $\succ$. We claim that the set $\mathcal{F}$ of the $2^{n-1}$ (single-peaked) consensual profiles constitutes a fooling set. 

To show this, first observe that in any such profile, the original assignment $\pi^0$ is Pareto-optimal. Indeed, in a consensual profile, no trading cycle is possible. Hence the aforementioned condition 1. of a fooling set is satisfied.  

Now to show that we can fool the function, consider any pair of profiles $(L_i,L_j)$. As these profiles are different, there must exist at least one pair of resources $(r_p,r_q)$ such that $r_p \succ r_q$ in $L_i$, while $r_q \succ r_p$ in $L_j$ (it is true for all agents since the profiles are consensual). Now consider the agent $a_p$ (resp. $a_q$) holding $r_p$ (resp. $r_q$) in $\pi^0$ and  a mixed profile $L'$ such that:
$$\forall k \in \agentSet, \succ_k^{L'} = \left\{\begin{array}{ll}
\succ_k^{L_i} & \text{if } k \neq p, \\
\succ_k^{L_j} & \text{if } k = p.
\end{array}\right.$$
Hence,  $a_p$ and $a_q$ have opposite preferences for $r_p$ and $r_q$ and would prefer to swap, i.e. \textsc{CheckPO}'s answer on $\tuple{\agentSet, \resourceSet, L', \pi^0}$ is no. This concludes the proof.
\end{proof}




The Diver is thus asymptotically optimal in terms of time and communication complexity.

\section{Future work}
\label{sec:conclusion}

A natural extension of the domain studied is to allow for indifferences in preferences. 
In the universal domain where indifferences can be expressed, \cite{aziz2012housing} defined a set of Pareto-optimal procedures generalizing TTC in that setting, which however include procedures that are not strategy-proof. \cite{plaxton2013simple} and \cite{saban2013house} independently proposed general frameworks for efficient and strategy-proof generalization of the TTC procedure with indifferences.
In her paper, \cite{bade2019matching} defines the ``circle crawling'' procedure, which enjoys the same properties as the crawler. 
It would be interesting to study whether a variant in the spirit of the Diver could be adapted in that setting as well. 



\bibliographystyle{elsarticle-harv}


\end{document}